\renewcommand*{\models}{%
  {\mkern5mu\|\mkern-9mu=}% subformula acts as \mathord
}
\newcommand{\rt}[2][]{%
\ifthenelse{\isempty{#1}}{\textsc{(#2)}}{($\textsc{#2}_\textsc{#1}$)}
}
\newcommand{\eqdef}{\overset{\underset{\mathrm{def}}{}}{=}}
\newcommand{\nil}{\mathbf{0}}
\renewcommand{\o}[1]{\overline{#1}}
\renewcommand{\u}[1]{\underline{#1}}
\renewcommand{\l}{\langle}
\renewcommand{\r}{\rangle}
\newcommand{\ite}[3]{\texttt{if }#1\texttt{ then }#2\texttt{ else }#3}
\renewcommand{\>}{\triangleright}
\renewcommand{\<}{\triangleleft}
\gdef\|{\text{\textpipe}}
\newcommand{\rulestable}[4][]{
    \bgroup
    \newcommand{\tableCols}{ll}
    \ifthenelse{\isempty{#1}}{
        \renewcommand{\tableCols}{ll}
    }{
        \renewcommand{\tableCols}{#1}
    }
    \def\arraystretch{2.5}
    \begin{table}[htb]
        \centering
        \begin{tabular}{\tableCols}
            \hline
            #4\vspace{3mm}
            \\ \hline
        \end{tabular}
        \vspace{.1cm}
        \caption{#2}
        \label{#3}
    \end{table}
    \egroup
}
\newcommand{\substitute}[2]{\{\nicefrac{#2}{#1}\}}
\newcommand{\trans}[1]{\xrightarrow{#1}}
\newcommand{\nottrans}[1]{\not\trans{#1}}
\lstdefinelanguage{pi}{
    %numbers=left,
    mathescape,
    basicstyle=\sffamily,
    keywordstyle=\tt,
    morekeywords={if, then, else, let, =, (, ), in},
    morecomment=[l]{\#},
    captionpos=below
} %
\newcommand{\arrowcdot}{\xleftrightarrow{\mkern-3mu\cdot}}
\newcommand{\vddash}{\models}
\newcommand{\para}{\mid}
\newcommand{\un}{\textsf{un }}
\newcommand{\lin}{\textsf{lin }}
\newcommand{\vmin}{\vdash_{\text{min}}}
\newcommand{\vnotmin}{\nvdash_{\text{min}}}
\newcommand{\dom}[1]{\text{dom}(#1)}
\newtheorem{lemma}{Lemma}
\theoremstyle{remark}
\newtheorem{example}{Example}
\theoremstyle{definition}
\newtheorem{definition}{Definition}
\newtheorem{theorem}{Theorem}
\newcommand{\aset}[1]{\{ #1 \}}
\newcommand{\besk}[1]{\langle #1 \rangle}
\newcommand{\inp}[2]{#1(#2).}
\newcommand{\outp}[2]{#1\besk{#2}.}
\newcommand{\slct}[2]{#1 \lhd #2.}
\newcommand{\brch}[2]{#1 \rhd \aset{ #2 }}
\newcommand{\new}[1]{(\nu\, #1)}
\newcommand{\fn}[1]{\text{fn}(#1)}
\newcommand{\fv}[1]{\text{fv}(#1)}
\newcommand{\bv}[1]{\text{bv}(#1)}
\newcommand{\ra}{\rightarrow}
\author{
     Jens Aagaard\thanks{jbaa14@student.aau.dk} \qquad Hans
     H\"uttel\thanks{hans@cs.aau.dk} \qquad Mathias Jakobsen
     \thanks{msja15@student.aau.dk} \qquad Mikkel Kettunen \thanks{mkettu16@student.aau.dk}
    \institute{Department of Computer Science, Aalborg University, Aalborg, Denmark}
}
\title{Context-Free Session Types for Applied Pi-Calculus}
\begin{document}
\maketitle

\begin{abstract}
We present a binary session type system using context-free session types to  a version of the applied pi-calculus of Abadi et. al. where only base terms, constants and channels can be sent. Session types resemble process terms from BPA and we use a version of bisimulation equivalence to characterize type equivalence. We present a quotiented type system defined on type equivalence classes for which type equivalence is built into the type system. Both type systems satisfy general soundness properties; this is established by an appeal to a generic session type system for psi-calculi.

%\keywords{Pi-calculus, Type System, Context-Free Session Types, Bisimulation, Type Equivalence}
\end{abstract}
\section{Introduction}

Binary session types \cite{Honda1998} describe the protocol followed
by the two ends of a communication medium, in which messages are
passed. A sound type  system of this kind guarantees that a well-typed
process does not exhibit communication errors at runtime. Session
types have traditionally been used to describe linear interaction
between partners \cite{Vasco2011}, but later type systems are able to
distinguish between linear and unlimited channel usages. In particular
Vasconcelos has proposed session types with $\textsf{lin}/\textsf{un}$
qualifiers that describe linear interaction as well as shared
resources \cite{Vasco2011}.  \emph{Context-free session types} introduced by
\cite{ThiemannVasco2016} are more descriptive than the regular types
described in previous type systems \cite{Honda1998,Vasco2011} in that
they allow full sequential composition of types. Because of this,
session types can now describe protocols that cannot be captured in
the regular session types, such as transmitting complex composite data
structures. 

Many binary session type systems are concerned with languages that use
the selection and branching constructs introduced in \cite{Honda1998}
in addition to the normal input and output constructs. These
constructs are synchronisation operations, where two processes
synchronise on a channel, and are similar to method invocations found
in object-oriented programming. A branching process $\brch{l}{l_1 :
  P_1, \ldots, l_n : P_n}$ continues as process $P_k$ together with $P$ if label $l_k$
is selected on channel $c$ using the selection $\slct{c}{l_k}P$.

In this article we consider a session type system for a version of the
applied pi-calculus, due to Abadi et. al. \cite{Abadi2016}. This is an
extension of the pi-calculus \cite{DBLP:books/daglib/0098267} with
terms and extended processes and in our case extended further with selection and
branching. Our version is a ``low-level'' version in that allows one
to build composite terms but only allows for the communication of
names and nullary function symbols. In this way, the resulting version
is close to the versions of the pi-calculus used for encoding
composite terms \cite{DBLP:books/daglib/0098267}.

Our session type system combine ideas from the type systems from
\cite{ThiemannVasco2016} and \cite{Vasco2011} into a type system in
which session types are context-free and use
$\textsf{lin}/\textsf{un}$ qualifiers in order to distinguish between
linear and unbounded resources. The resulting type system uses types
that are essentially process terms from a variant of the BPA process
calculus \cite{Bergstra1988}.

In Section \ref{sec:syntax} we present our version of the applied
pi-calculus and in Section \ref{sec:typesystem} we define the
syntax and semantics of our session type system. We then prove the
soundess of our type system by using the general results about
psi-calculi from \cite{Huttel2016}. This is done by showing that the
applied pi-calculus is a psi-calculus, and that our type system is
a instance of the generic type system presented by H\"uttel in
\cite{Huttel2016}. 

Lastly we present type equivalence between types by introducing a
notion of type bisimulation for endpoint types. By considering
equivalence classes under type bisimilarity we get a new quotiented
type system whose types are equivalence classess. It then follows from
the theorems in \cite{Huttel2016} that the general results for our
first type system also hold for the quotiented type system.

\section{Applied pi-Calculus}
\label{sec:syntax}

We consider a ``low-level'' version of the \emph{applied
  pi-calculus} \cite{Abadi2016} in which composite terms are
allowed but only the transmission of simple data is possible: Only
names $n$, constants represented by functions $f_0$ with arity $0$ and
functions that evaluate to values of base types can be transmitted. We
use the notation $\widetilde{M}$ to represent the sequence
$M_1,\ldots,M_i$ and $\widetilde{x}$ to represent the sequence of
variables $x_1,\ldots,x_i$. We always assume that our processes are
specified relative to a family of parameterized agent definitions that
are on the form $N(\widetilde{x})\eqdef A$ and that every agent
variable $N$ occurring in a process has a corresponding definition.

The formation rules for processes $P$ and extended processes $A$ can
be seen below in \eqref{gram:processes}. Note that we distinguish
between variables ranged over by $x, y \ldots$ and names ranged over
by $m,n, \ldots$. We let $a$ range over the union of these sets. We
extend the syntax of processes with branching
$\brch{c}{l_1:P_1,\ldots,l_k:P_k} $ and selection $\slct{c}{l}P$ where
$l_1, \ldots l_k$ are taken from a set of labels.

Extended processes extend processes with the ability to use
\emph{active substitutions} of the form $\substitute{x}{M}$ that
instantiate variables.

% \begin{table}[H]
%     \centering
%     \begin{tabular}{rl}
%          $P$  & Processes \\
%          $M$ & Terms \\
%          $A$ & Extended Processes \\
%          $f$ & Function names \\
%          $l$ & Labels \\ 
%          $n$, $c$ & Names \\
%          $x$ & Variables \\
%          $N$ & Agents \bigskip
%     \end{tabular}
%     \caption{Metavariables in applied pi-calculus}
%     \label{tab:metavariables}
% \end{table}

\begin{abstractsyntax}
    \label{gram:processes}
    P ::=\ &\nil \mid P_1 \para P_2 \mid !P \mid \new{n}P \mid \new{n}
    P \mid \ite{M_1=M_2}{P_1}{P_2} \\
    \mid & \, \inp{n}{x}P \mid \outp{n}{u}P \mid \slct{c}{l}P \mid c
    \rhd \{ l_1:P_1,\ldots,l_k:P_k \}  \mid N(\widetilde{M})\\
    u ::= \ & n \mid x \mid f_0 \\
    M ::=\ & n \mid x \mid f(\widetilde{M}) \\
    A ::=\ &P \mid A_1 \para A_2 \mid (\nu n)A \mid \new{x}A \mid \substitute{x}{M}\\
\end{abstractsyntax}

The notion of structural congruence extends that of the usual
pi-calculus \cite{DBLP:books/daglib/0098267}. The following two
further axioms that are particular to the
applied pi-calculus are of particular importance, as they show the
role played by active substitutions.
\begin{align*}  P \para \substitute{x}{M} \equiv P\substitute{x}{M}
  \para \substitute{x}{M} & & & \nil \equiv \new{x} \substitute{x}{M}
                            \end{align*}
Together with the axioms of \cite{DBLP:books/daglib/0098267} they
allow us to factor out composite terms such that they only occur in 
active substitutions. For instance, we have that $\ite{M_1 = M_2}{P}{Q}
\equiv \nu{x}\nu{y}(\ite{x=y}{P}{Q} \para \substitute{x}{M_1} \para \substitute{y}{M_2})$. We can
therefore use a term by only mentioning the variable associated with
it. In our version of the applied pi-calculus we very directly make use of this.

Our semantics consists of reduction semantics and a labelled operational semantics that
extend those of \cite{Abadi2016} with rules for branching and
selection. Reductions are of the form $P \ra P'$, while transitions are of the form $P \trans{\alpha} P'$ where the
label $\alpha$ is given by
\[ \alpha ::= c \lhd l \mid c \rhd l \mid a(x) \mid \overline{a}x \mid
  \tau \]
and $c \lhd l$ is the co-label of $c \rhd l$. The rules defining
reductions and labelled transitions are found in Tables
\ref{tab:red-rules} and \ref{tab:trans-rules}, respectively.

\rulestable{Reduction rules}{tab:red-rules}{
    \rt{Com} & $\outp{x}{y}P \para \inp{x}{y} Q \ra P \para Q$ \\[3mm]
    \rt{Select} & $\slct{c}{l_k}P \para \brch{c}{l_1 : P_1,
                  \ldots, l_n : P_n} \ra P \para P_k$ \quad for $1 \leq k
                  \leq n$ \\[3mm]
    \rt{Match-True} & $\ite{M=M}{P}{Q} \ra P$ \\[3mm]
    \rt{Match-False} & $\ite{M=N}{P}{Q} \ra Q$
 }

 \begin{table}[h]
   \begin{center}
     \hrule
     \begin{tabular}{llll}
       & & \\
      \rt{Red} & \inference{P \ra P'}{P \trans{\tau} P'} & 
     \rt{Select} & $\slct{c}{l}P \trans{c \lhd l}
                   P$ \\[4mm]
     \rt{Branch} & $\brch{c}{l_1:P_1,\ldots,l_k:P_k} \trans{ c \rhd
                                                  l_i} P_i$ &
                       \rt{Par} & \inference{P \trans{\alpha} P'}{P \para Q
                                  \trans{\alpha} P' \para Q}  \\[5mm]
               & for $1 \leq i \leq k$ & &  if
               $\bv{\alpha} \cap \fv{Q} = \emptyset$\\[4mm]
  \rt{Output} & $\outp{x}{y}P \trans{x\besk{y}} P$ & \rt{Input} & $\inp{x}{y}P \trans{x(y)} P$ \\[3mm]% \\[3mm]
 
  \rt{New} & \inference{P \trans{\alpha} P'}{\new{n}P \trans{\alpha}
             \new{n}P'} \quad if $n \notin \fn{\alpha}$ & %\\[6mm]
  \rt{Struct} & \inference{P \equiv Q \quad Q \trans{\alpha} Q' \quad Q'
                \equiv P'}{P \trans{\alpha} P} \\
       & & \\
    \end{tabular}
    \hrule
\end{center}
\caption{Labelled transition rules}
\label{tab:trans-rules}
\end{table}

%%% mode: latex
%%% TeX-master: "psicalculus"
%%% End:

\section{A context-free session type system}
\label{sec:typesystem}
We now present the syntax and semantics of our type system.

\subsection{Session types}
\label{sec:sessiontypeintroduction}

Session types describe the communication protocol followed by a
channel. Consider writing a process that transmits a binary tree where
each internal node containsan integer. We want to transmit this tree
by sending only base types (i.e. the integers) on a channel. The tree
data type can be described with the grammar  below. 
\begin{equation*}
    \textsf{Tree} ::= (\textsf{Int}, \textsf{Tree}, \textsf{Tree})\ |\ \textsf{Leaf}
\end{equation*}
Using the regular types introduced in \cite{Vasco2011}, the type for a
channel transmitting such a data structure could be the recursive type
$\mu z.\oplus\{\textsf{Leaf}: \lin \textsf{skip}\quad \textsf{Node}:
\lin !\textsf{Int}.z\}$. In this type $z$ is a type variable that is
recursively defined. The type describes how a single node is
transmitted, if it is a leaf node we do not do anything, if it is an
internal node then the integer value is transmitted with the output
type $!\textsf{Int}$ and then the sub-trees of the node are
transmitted with a recursive call.

However, if we use this regular session type, we are not able to
guarantee that the tree structure is preserved. The reason is that the
session type describes that a list of nodes are being sent, but not
the position in the tree of each node. On the other hand, if we use
the context-free session type disciple introduced by Thiemann and
Vasconcelos \cite{ThiemannVasco2016}, we can specify the preservation
of tree structures by using types such as
$\mu z.\oplus\{\textsf{Leaf}: \textsf{skip}\quad \textsf{Node}:
!\textsf{Int};z;z\}$. Using sequential composition with the $\_;\_$
operator, we can specify a protocol that will guarantee that the tree
structure by first sending the left sub-tree and then the right
sub-tree. Introducing a sequential operator can introduce challenges
for typing a calculus, as the following example shows. If we
were to reuse a channel by sending an integer after transmitting a
tree, the type would be
$\mu z.\oplus\{\textsf{Leaf}: \textsf{skip}\quad \textsf{Node}:
!\textsf{Int};z;z\};!\textsf{Int}$.
\begin{equation}
    \label{eq:selectRule} 
    \inference{\Gamma(c)=\oplus\{l_i: T_{E_i}\}_{i\in I}}{\Gamma \vdash \text{select } l_i \text{ on } c}
\end{equation}
When typing rules are created for a calculus, it is often defined on
the structure of types and terms. An example of such a typing rule for
a select statement is shown in \eqref{eq:selectRule}, which says that
a select statement is well typed if the select operation is performed
on a channel with a select type. If however the channel has a type as
shown before, the select rule cannot be used, as the channel has the
type of a sequential composition, and inside the sequential
composition we have a recursive type.

We require an equirecursive treatment of types, which allows us to
expand the type to $\oplus\{\textsf{Leaf}: \textsf{skip}\quad
\textsf{Node}: !\textsf{Int};\mu z.\oplus\{\textsf{Leaf}:
\textsf{skip}\quad \textsf{Node}: !\textsf{Int};z;z\};\mu
z.\oplus\{\textsf{Leaf}: \textsf{skip}\quad
\textsf{Node}: !\textsf{Int};z;z\}\};!\textsf{Int}$. This is achieved
by unfolding a recursive type $\mu z.T$ to $T$ where all occurrences
of $z$ in $T$ are replaced with the original $\mu z.T$. So now we are
left with a sequential composition with a select type and output type.
In order to transform this into a select type, we need a distributive
law that allows us to move the sequential composition inside the
select type, in order to obtain the type $\oplus\{\textsf{Leaf}:
\textsf{skip};!\textsf{Int}\quad \textsf{Node}: !\textsf{Int};\mu
z.\oplus\{\textsf{Leaf}: \textsf{skip}\quad
\textsf{Node}: !\textsf{Int};z;z\};\mu z.\oplus\{\textsf{Leaf}:
\textsf{skip}\quad
\textsf{Node}: !\textsf{Int};z;z\};!\textsf{Int}\}$. Such a rule does
exist, we will introduce a method to prove that these types exhibit
the same behaviour in Section \ref{sec:typeequivalence} when we
introduce type equivalence. 

\subsection{The language of types}

We denote the set of all types by $\mathcal{T}$; the types $T\in
\mathcal{T}$ are described by the formation rules in
\eqref{gram:types}. These session types are a modified version of the
context-free session types presented by \cite{ThiemannVasco2016}. The
modifications made to the types are that we allow input and output
session types to transmit other session types to allow sending
channels in other channels, and finally that we introduce the \lin and
\un qualifiers. We let $B$ range over a set of base types.

\begin{abstractsyntax}
    \label{gram:types}
    p ::=\ &\textsf{skip} \mid ?T \mid !T \mid \&\{l_i:T_{E_i}\} \mid \oplus\{l_i:T_{E_i}\}  \\
    q ::=\ &\lin \mid \un \\
    T_E ::=\ &q\ p \mid z \mid \mu z.T_E \mid T_{E_1}; T_{E_2} \\
    T ::=\ & S \mid B \mid T_E \\
    S ::=\ & (T_{E_1}, T_{E_2})
\end{abstractsyntax}

From the formation rules we can see that a session type $S$ is a pair
of endpoint types $T_{E_1}$ and $T_{E_2}$. Endpoint types describe one
end of channel, and are the types that evolve when a channel is used.
The qualifiers \lin and \un from \cite{Vasco2011} are used to describe
a linear interaction between two partners and a unrestricted shared
resources respectively. An example of an \un type could be a server,
that a lot of processes have access to. To ensure that no
communication errors occur if multiple processes can read from a
channel concurrently, we require that the type must never change
behaviour. This means that an \un type must be the same before and
after a transition. 

\subsection{Transitions for types}\label{sec:typetransitions}

We now present the transition rules for endpoint types, which describe
how the types can evolve when an action is performed. We use an
annotated reduction semantics to describe the behaviour of our types.
Our labels are generated by the grammar in \eqref{gram:actions} where
we have select, branch, input and output actions. The transitions are
of the form $T_E\trans{\lambda}T_E'$ and are shown in Table
\ref{tab:typetransitions}. We let $T_E\substitute{x}{y}$ be the
endpoint type $T_E$ where all free occurrences of $x$ has been
replaced with $y$.  

\begin{equation}
  \lambda ::= !T_1 \mid ?T_1 \mid \rhd \, l \mid \lhd \, l
  \label{gram:actions}
\end{equation}

In the transition rules, we use the function $Q$ defined in
\eqref{eq:q} to find the qualifier of compound types such as recursive
types or sequential composition types.   
\begin{align}
\label{eq:q}
\begin{split}
    {Q}(q\ p) &\eqdef q \\
    {Q}(T_{E_1};T_{E_2}) &\eqdef {Q}(T_{E_1}) \\
    {Q}(\mu z.T_E) &\eqdef {Q}(T_E)
\end{split}
\end{align}
A relation $\sqsubseteq=\{(\textsf{lin},\textsf{un}),(\textsf{un},\textsf{un}),(\textsf{lin},\textsf{lin})\}$
is defined for qualifiers in \cite{Vasco2011}. We also follow the
definition of $q(T)$ and $q(\Gamma)$ from that of \cite{Vasco2011}. In
short $\textsf{lin}(\Gamma)$ is always satisfied, and
$\textsf{un}(\Gamma)$ is satisfied iff all elements in $\Gamma$ are
unrestricted. 

The type system contains the sequential operator $\_;\_$ as well as
choice operators; select $\&\{\ldots\}$ and branch $\oplus\{\ldots\}$.
This is very similar to Basic Process Algebra (BPA)\cite{Bergstra1988}
that contains the sequential operator $\cdot$ and nondeterministic
choice operator $+$. We also have recursive types, which corresponds
to variables with recursive definitions in BPA. A BPA expression is
\textit{guarded} if all recursive variables on the right hand side is
preceded by an action $\lambda$ \cite[p. 53]{Bergstra1988}. Similarly
we say that a type is guarded if every recursion variable is preceded
by an input or output. These similarities with BPA will become very
important, as we can describe types as BPA expressions and by showing
results about these expression, we can in turn show results about our
types. 

\rulestable[llll]{Annotated reduction semantics for types}{tab:typetransitions}{
    \rt{Input} & \inference{}{q\ ?T\trans{?T}q\ \textsf{skip}} &
    \rt{Output} & \inference{}{q\ !T\trans{!T}q\ \textsf{skip}} \vspace{0.1cm}\\[5mm]
    \rt{Seq1} & \inference{T_{E_1} \trans{\lambda} T'_{E_1} & Q(T_{E_1})\sqsubseteq Q(T_{E_1}')}{T_{E_1} ; T_{E_2} \trans{\lambda}T'_{E_1} ; T_{E_2}} &
    \rt{Seq2} & \inference{T_{E_1} \nottrans{} & Q(T_{E_1})\sqsubseteq
      Q(T_{E_2}) \\ \\ T_{E_2} \trans{\lambda} T'_{E_2} & Q(T_{E_2})\sqsubseteq Q(T_{E_2}')}{T_{E_1};T_{E_2} \trans{\lambda} T'_{E_2}} \\
    \rt{Select} & \inference{q \sqsubseteq Q(T_{E_k})}{q\ \oplus\{l_i : T_{E_i}\}_{i\in I} \trans{\triangleleft l_k} T_{E_k} } &
    \rt{Branch} & \inference{q \sqsubseteq Q(T_{E_k})}{q\ \&\{l_i : T_{E_i}\}_{i\in I} \trans{\triangleright l_k} T_{E_k} }\\
    \rt{Rec} & \inference{T_E\substitute{z}{\mu z.T_E}\trans{\lambda}T'_E}{ \mu z .T_E\trans{\lambda}T_{E_1}'}
}

\subsection{Typing rules}

We now present a type system for our version of the applied
pi-calculus in Table \ref{tab:typesystem}. In this type system, the
type judgements for processes are on the form $\Gamma \vdash P$
meaning that the process $P$ is well typed in context $\Gamma$. The
judgments for terms are on the form $\Gamma\vdash M:T$ meaning that
the term $M$ is well typed with type $T$ in context $\Gamma$. Lastly
the judgments for extended processes are on the form
$\Gamma \vdash_A A$ meaning that the process $A$ is well typed in
context $\Gamma$. We type processes in a type context $\Gamma$ which
contains types for the variables, names and functions symbols of a
process. We follow the definition of a type context from
\cite{Vasco2011}: $\emptyset$ is the empty context, $\Gamma, x: T$ is
the context equal to $\Gamma$ except that $x$ has the type $T$ in the
new context. This operation is only defined when
$x\notin \text{dom}(\Gamma)$.

Table \ref{tab:typesystem} shows the typing rules for processes. We use the context split $\circ$ and context update $+$ operations from \cite{Vasco2011}. 

The context split operation is used to split a context into two
constituents. A maximum of two processes must have access to a given
linear session type; a context either contains the entire session type
$S=(T_{E_1},T_{E_2})$, or a single endpoint type $T_{E}$. When
splitting a context into two, we can pass $S$ to either context, or
one endpoint to each context. If the context only has an endpoint
type, the endpoint type can only be passed on to one of the two
contexts. This way we ensure that each \lin endpoint of a channel is
known in exactly one context. Names of unrestricted type can be shared
among all contexts. 

The context update operation updates the type of a channel. The
$\Gamma, x:T$ operation is only defined when
$x\notin\text{dom}(\Gamma)$. The $+$ operation
$\Gamma=\Gamma_1+\Gamma_2$ uses the type of $x$ in $\Gamma_2$ to
update the type in $\Gamma_1$. So if $\Gamma_1(x)=T_1$ and
$\Gamma_2(x)=T_2$ then $\Gamma(x)=T_2$. The two operations are used in
\rt{Input} and \rt{Output} where we must split our context into two,
to type each endpoint of a linear channel in its own context.

\rulestable{Typing rules for Extended Processes}{tab:extendedProcesses}{
    \rt{Plain} & \inference{\Gamma \vdash P}{\Gamma \vdash_A P} \\
    \rt{Par} & \inference{\Gamma_1 \vdash_A A_1 & \Gamma_2 \vdash_A A_2 }{\Gamma_1 \circ \Gamma_2 \vdash_A A_1 | A_2} \\
    \rt{Name-Res} & \inference{\Gamma, n: S \vdash_A A }{\Gamma \vdash_A (\nu n)A} \\
    \rt{Var-Res} & \inference{\Gamma, x: T \vdash_A A }{\Gamma \vdash_A (\nu x)A} \\
    \rt{Sub} & \inference{\Gamma \vdash x: T & \Gamma \vdash M: T}{\Gamma \vdash_A \substitute{x}{M}} 
  }
  
\rulestable[c]{Context split for Applied $\pi$-calculus, based on  \cite{Vasco2011}}{tab:contextsplit}{
    \inference{}{\emptyset = \emptyset \circ \emptyset} \\
    \inference{\Gamma_1 \circ \Gamma_2 = \Gamma & Q(T_E) = \un}{\Gamma, n: T_E = (\Gamma_1, n:T_E) \circ (\Gamma_2, n:T_E)} \\
    \inference{\Gamma_1 \circ \Gamma_2 = \Gamma & S = (T_{E_1}, T_{E_2}) & Q(T_{E_1}) = \un & Q(T_{E_2}) = \un}{\Gamma, n: S = (\Gamma_1, n:S) \circ (\Gamma_2, n:S)} \\
    \inference{\Gamma_1 \circ \Gamma_2 = \Gamma & Q(T_E) = \lin}{\Gamma, n: T_E = (\Gamma_1, n:T_E) \circ \Gamma_2} \\
    \inference{\Gamma_1 \circ \Gamma_2 = \Gamma & Q(T_E) = \lin}{\Gamma, n: T_E = \Gamma_1 \circ (\Gamma_2, n:T_E)} \\
    \inference{\Gamma_1 \circ \Gamma_2 = \Gamma & S = (T_{E_1}, T_{E_2}) & Q(T_{E_1}) = \lin & Q(T_{E_2}) = \lin }{\Gamma, n: S = (\Gamma_1, n:S) \circ \Gamma_2} \\
    \inference{\Gamma_1 \circ \Gamma_2 = \Gamma & S = (T_{E_1}, T_{E_2})& Q(T_{E_1}) = \lin & Q(T_{E_2}) = \lin }{\Gamma, n: S = \Gamma_1 \circ (\Gamma_2, n:S)} \\
    \inference{\Gamma_1 \circ \Gamma_2 = \Gamma & S = (T_{E_1}, T_{E_2})& Q(T_{E_1}) = \lin & Q(T_{E_2}) = \lin }{\Gamma, n: S = (\Gamma_1, n: T_{E_1}) \circ (\Gamma_2, n: T_{E_2})} \\
    \inference{\Gamma_1 \circ \Gamma_2 = \Gamma & S = (T_{E_1}, T_{E_2})& Q(T_{E_1}) = \lin & Q(T_{E_2}) = \lin}{\Gamma, n: S = (\Gamma_1, n: T_{E_2}) \circ (\Gamma_2, n: T_{E_1})}
}

\rulestable[ll]{Context update for Applied $\pi$-calculus from \cite{Vasco2011}}{tab:contextupdate}{
    \inference{}{\Gamma = \Gamma + \emptyset} & 
    \inference{\Gamma = \Gamma_1 + \Gamma_2}{\Gamma, n: T = \Gamma_1 + (\Gamma_2, n :T)} \\
    \multicolumn{2}{c}{
    \inference{\Gamma = \Gamma_1 + \Gamma_2 & Q(T_E) = \un}{\Gamma, n: T_E = (\Gamma_1, n: T_E) + (\Gamma_2, n: T_E)}}
}

\rulestable{Typing rules for terms}{tab:termsrules}{
    \rt{Name} & \inference{\textsf{un}(\Gamma)}{\Gamma, n:T \vdash n:T} \\
    \rt{Variable} & \inference{\textsf{un}(\Gamma)}{\Gamma, x:T \vdash x:T} \\
    \rt{Fun} & \inference{\Gamma(f) = T_1 \times T_2 \times \dots \times T_k -> T}{\Gamma \vdash f:T_1 \times T_2 \times \dots \times T_k -> T}
}

\rulestable{Typing rules for processes}{tab:typesystem}{
    \rt{Nil} &\inference{\textsf{un}(\Gamma)}{\Gamma \vdash \nil} \\
    \rt{Par} &\inference{\Gamma_1 \vdash P_1 & \Gamma_2 \vdash P_2}{\Gamma_1 \circ \Gamma_2 \vdash P_1 | P_2} \\
    \rt{Repl} & \inference{\Gamma \vdash P & \textsf{un}(\Gamma)}{\Gamma \vdash !P}  \\ % \hspace{7mm} where $\Gamma$ unlimited
    \rt{Res} & \inference{\Gamma,n:S \vdash P}{\Gamma \vdash (\nu n)P} \\
    \rt{If} & \inference{\Gamma \vdash P_1 & \Gamma \vdash P_2 & \Gamma \vdash M_1 : T & \Gamma \vdash M_2 : T}{\Gamma \vdash \ite{M_1=M_2}{P_1}{P_2} } \\
    \rt{Input} & \inference{\Gamma_1 \vdash n:T_E & T_E \trans{?T} T_E' & \Gamma_2, x:T + n:T_E' \vdash P}{\Gamma_1 \circ \Gamma_2 \vdash n(x).P } \\
    \rt{Output} & \inference{\Gamma_1 \vdash n:T_E & T_E \trans{!T} T_E' & \Gamma_2 \vdash M:T & \Gamma_2 + n:T_E' \vdash P}{\Gamma_1 \circ \Gamma_2 \vdash \o{n} \l M \r .P } \\
    \rt{Select} & \inference{T_E\trans{\<l}T_E' & \Gamma, c:T_E'\vdash P}{\Gamma, c:T_E \vdash c\< l.P} \\
    \rt{Branch} & \inference{T_E\trans{\> l_i}T_E' & \Gamma, c:T_E'\vdash P_i }{\Gamma, c:T_E \vdash c\>\{l_1:P_1,\ldots, l_k:P_k\}} for all $1\leq i \leq k$ \\
    \rt{Agent} & \inference{\Gamma \vdash_A A|\substitute{\widetilde{x}}{\widetilde{M}}}{\Gamma \vdash N(\widetilde{M})} where $N(\widetilde{x})\eqdef A$
}

\subsection{Duality of types}
\label{sec:duality}
The notion of type duality is central to session type systems; it
expresses that the protocols followed by the two endpoints of a name
must be opposites: If one end transmits a value, the other end must
receive it. We denote the dual of an endpoint type $\o{T_E}$ and
define duality below, following \cite{ThiemannVasco2016}. 

\begin{align*}
    \o{q\ \textsf{skip}}&\eqdef q\ \textsf{skip} && \o{q\ \&\{l_1: T_{E_1}, \ldots, l_k:T_{E_k}\}}\eqdef q\ \oplus\{l_1: \o{T_{E_1}}, \ldots, l_k:\o{T_{E_k}}\}\\
    \o{q\ ?T}&\eqdef q\ !T && \o{q\ \oplus\{l_1: T_{E_1}, \ldots, l_k:T_{E_k}\}}\eqdef q\ \&\{l_1: \o{T_{E_1}}, \ldots, l_k:\o{T_{E_k}}\}\\
    \o{q\ !T}&\eqdef q\ ?T &&  \o{\mu z.T_E} \eqdef \mu z.\o{T_E} \\
    \o{T_{E_1};T_{E_2}}&\eqdef \o{T_{E_1}};\o{T_{E_2}} && \o{z} \eqdef z
\end{align*}

A session type $S = (T_{E_1},T_{E_2})$ is \emph{balanced} iff its
endpoint types are dual, that is, if $\o{T_{E_1}} = T_{E_2}$. A type
context $\Gamma$ is balanced iff every session type in the range of
$\Gamma$ is balanced. We use the notation $\Gamma \vdash_\text{bal}P$
to describe that a process $P$ is well typed in a balanced context
$\Gamma$.

\section{Applied pi-calculus as a psi-calculus instance}
\label{sec:psicalculus}
Psi-calculus is a general framework for process calculi. In this section we show that the applied pi-calculus is an instance of it, and this will then be used in Section \ref{sec:prop} to obtain results about our type system.

An instance of the psi-calculus framework contains the seven elements \cite{Bengtson2011} given in Table \ref{tab:psicalculielements}.

\begin{table}[H]
    \centering
    \begin{tabular}{ll}
         \textbf{T} & Data terms \\
         \textbf{C} & Conditions \\
         \textbf{A} & Assertions \\
         $\arrowcdot : \textbf{T} \times \textbf{T} \rightarrow \textbf{C}$ & Channel Equivalence\\
         $\otimes:  \textbf{A} \times \textbf{A} \rightarrow \textbf{A}$ & Composition \\
         $\textbf{1}: \textbf{A}$ & Unit \\
         $\vdash \subseteq \textbf{A} \times \textbf{C}$ & Entailment\\ 
    \end{tabular}
    \caption{Elements of psi-calculi}
    \label{tab:psicalculielements}
\end{table}

The syntax of an instance of the psi-calculus is described by the formation rules in \eqref{gram:psicalculus} that generalize those of the pi-calculus. The input and output constructions allows to use arbitrary terms as channels, and the input construction allows for matching on a pattern $(\lambda \tilde{x})N$; the pattern variables in $\tilde{x}$ are bound to the subterms that match. The other syntactic constructs are similar to those of the pi-calculus. The case construct is a generic case of the if-construct; we generalize match conditions $M_1=M_2$ to allow any $\phi\in\textbf{C}$ as a condition. Lastly in psi-calculi we have a concept of assertions $\Uppsi\in\textbf{A}$. These generalize the notion of active substition found in the applied pi calculus.

\begin{align}
    P :=\ &\nil \mid \o{M}N.P \mid \u{M}(\lambda\widetilde{x})N.P \mid \textbf{case}\  \phi_1:P_1[]\ldots [] \phi_n: P_n \\
    |\ &(\nu a)P \mid P\|Q \mid !P \mid \llparenthesis\Uppsi\rrparenthesis \mid M \< l.P \mid M \> \{ l_1:P_1,\ldots,l_k:P_k \}
   \label{gram:psicalculus}
\end{align}

The structural operational semantics of psi-calculi has transitions of the form $\Psi \blacktriangleright P\trans{\alpha}P'$ where the labels $\alpha$ are given by the formation rules below \cite{Huttel2016}. 

\begin{abstractsyntax}
\label{gram:psiactions}
 \alpha ::= M \< l | M \> l | \o{M}(v\widetilde{a})N | \u{K}N | (v\widetilde{a})\tau @ (v\widetilde{b})(\o{M}N\u{K}) | (v\widetilde{a})\tau @ (M\<l\>N)   
\end{abstractsyntax}

The first four actions correspond to the actions we know from the applied pi-calculus: selection, branch, output and input.
The last two action are internal $\tau$-actions that correspond to internal actions that are either an input/output-exchange or a branch/select-exchange.

It follows from\cite[p. 8]{Bengtson2011} that the standard pi-calculus is an instance of the psi-calculus. Below we do the same for the instance \textbf{APi} that shows that the applied pi-calculus is also an instance of the psi-calculus. In the definition $\mathcal{N}$ is the set of all names and $\mathcal{F}$ is the set of all function names.
\begin{align*}
    \textbf{T} &\eqdef \mathcal{N} \cup \{f(M_1, \ldots M_k) | f \in \mathcal{F}, M_i \in \textbf{T}\} & & 
    \textbf{C} && \eqdef \{M=N | M,N\in \textbf{T}\} \\
    \textbf{A} &\eqdef \{1\} & & 
    \arrowcdot &&\eqdef \{((n, n), n=n) | n \in \mathcal{N}  \} \\
    \otimes &\eqdef \{((\Uppsi_1, \Uppsi_2), 1) | \Uppsi \in \textbf{A}  \} & & 
    \textbf{1} & &\eqdef 1 \\
    \vdash &\eqdef \{(1, M=M) | M\in \textbf{T}\} 
\end{align*}
\section{Properties of our type system} \label{sec:prop}

A generic binary session type system for psi-calculi was presented in \cite{Huttel2016}. The intention is that any existing binary session type systems for process calculi can be captured as special instances of the generic system, as long it satisfies four specific requirements. We already know that our applied pi-calculus is a simple psi-calculus and now establish that our type system fulfils the requirements of \cite{Huttel2016}. This will then allow us to obtain the usual results for binary type systems as a simple corollary of the theorems for the generic system.

\subsection{Transition structure}

Type transitions in our type system are an instance of the generic type transitions in \cite{Huttel2016}. Both the generic type transitions and our type transition consist of send, receive, branch and select. The syntax of type transitions is uniform across the two articles as they are both generated by the grammar in \eqref{gram:typetrans}. This illustrates that there is a one-to-one correspondence between the type transitions in the two type systems.

\begin{equation}
  \lambda ::= \< l | \> l | !T | ?T
      \label{gram:typetrans}
\end{equation}

\subsection{Revisiting duality}
Duality of types in our pi-calculus is defined on the structure of types, as seen in Section \ref{sec:duality}. In \cite{Huttel2016} duality is defined on type transitions, where \eqref{eq:typetransdual} holds for dual types (we use $\u{\,\cdot\,}$ to denote duality defined on type transitions).
\begin{equation}
    \label{eq:typetransdual}
    T_E \trans{\lambda} T_E' \Leftrightarrow \u{T_E} \trans{\o{\lambda}} \u{T_E'}
\end{equation}
In \eqref{eq:dualInOut} and \eqref{eq:dualLabels} we see the duality of type transitions, as presented by \cite{Huttel2016}.
\begin{align}
    \label{eq:dualInOut}
        \o{!T_1} = ?T_2 && \o{?T_1} = !T_2  \\
    \label{eq:dualLabels}
        \o{\<l} = \>l && \o{\>l} = \<l
\end{align}

We now show that the duality defined in Section \ref{sec:duality} upholds the property in \eqref{eq:typetransdual}.

\begin{lemma}
\label{lemma:duality}
$\u{T_E} =\o{T_E}$
\end{lemma}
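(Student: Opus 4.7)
The plan is to show that the structural duality $\o{\cdot}$ from Section~\ref{sec:duality} satisfies the defining biconditional of $\u{\cdot}$ in~\eqref{eq:typetransdual}. In other words, I will prove $T_E \trans{\lambda} T_E'$ iff $\o{T_E} \trans{\o{\lambda}} \o{T_E'}$ for all endpoint types $T_E, T_E'$ and all labels $\lambda$. Since this is the defining property of $\u{T_E}$, the equality $\u{T_E} = \o{T_E}$ follows.

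Before the main induction I would record three auxiliary facts, each by a routine structural induction on $T_E$: (i) $\o{\cdot}$ is involutive, $\o{\o{T_E}} = T_E$, and likewise $\o{\o{\lambda}} = \lambda$ using \eqref{eq:dualInOut} and \eqref{eq:dualLabels}; (ii) duality preserves qualifiers, $Q(\o{T_E}) = Q(T_E)$, so the side conditions $q \sqsubseteq Q(T_{E_k})$ and $Q(T_{E_1}) \sqsubseteq Q(T_{E_1}')$ appearing in Table~\ref{tab:typetransitions} transport through duality; (iii) duality commutes with substitution, $\o{T_E\substitute{z}{T_E''}} = \o{T_E}\substitute{z}{\o{T_E''}}$, which will be needed to handle the \rt{Rec} case.

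With these in hand, I would prove the forward direction of the biconditional by induction on the derivation of $T_E \trans{\lambda} T_E'$, with one short calculation per rule of Table~\ref{tab:typetransitions}. For \rt{Input} and \rt{Output} the duals $\o{q\,?T} = q\,!T$ and $\o{?T} = !T$ immediately swap the two rules, and $\o{q\,\textsf{skip}} = q\,\textsf{skip}$. For \rt{Select} and \rt{Branch} the structural definition swaps $\oplus$ with $\&$, $\o{\lhd l} = \rhd l$, and the qualifier side condition is preserved by (ii). For \rt{Seq1} I apply the IH to the premise and note $\o{T_{E_1};T_{E_2}} = \o{T_{E_1}};\o{T_{E_2}}$; for \rt{Rec} I use (iii) to rewrite $\o{T_E\substitute{z}{\mu z.T_E}}$ as $\o{T_E}\substitute{z}{\mu z.\o{T_E}}$ and then apply the IH. The reverse direction then follows from the forward direction together with (i): if $\o{T_E} \trans{\o{\lambda}} \o{T_E'}$, then applying the forward direction to this transition yields $\o{\o{T_E}} \trans{\o{\o{\lambda}}} \o{\o{T_E'}}$, which by involutivity is exactly $T_E \trans{\lambda} T_E'$.

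The main obstacle will be the \rt{Seq2} case, whose premise contains the negative hypothesis $T_{E_1} \nottrans{}$. To transport it through $\o{\cdot}$ one needs to know that $\o{T_{E_1}}$ has no transitions either, which is really the reverse direction of the biconditional applied to the smaller type $T_{E_1}$. I would resolve this by stating the biconditional as a single equivalence and performing the induction on the structure of $T_E$ (measuring recursive types by a single unfolding of the outermost $\mu$, justified by guardedness as mentioned in Section~\ref{sec:typetransitions}), so that both directions are available as the IH for every proper sub-expression; under this measure, the \rt{Seq2} side conditions and the negative premise transport symmetrically, and the remaining cases go through as above.
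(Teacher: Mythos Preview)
Your proposal is correct and essentially matches the paper's approach: the paper's entire proof is the single line ``By induction in the structure of types,'' which is precisely the structural induction on $T_E$ you settle on after observing that the negative premise in \rt{Seq2} forces both directions of the biconditional to be available simultaneously as the induction hypothesis. Your auxiliary facts (involutivity, qualifier preservation, commutation with substitution) and the per-rule case analysis are exactly the details the paper leaves implicit.
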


\begin{proof}
By induction in the structure of types. %The proof is included in appendix \ref{app:dualityproof}.
\end{proof}

\subsection{Checking requirements}

In the type system presented in \cite{Huttel2016}, type judgements are
relative to a type context and an assignment and therefore of the form
$\Gamma,\Uppsi \vdash \mathscr{J}$, where the judgment body is either
$\mathscr{J}$ is either a term typing $M : T$ or $P$, the statement
that process $P$ is well-typed. We write $\Gamma,\Uppsi
\vdash_\text{min} \mathscr{J}$, if $\Gamma',\Uppsi'
\not\vdash_\text{min} \mathscr{J}$ for every smaller $\Gamma'$ and
$\Uppsi'$.  

For each requirement presented in \cite{Huttel2016} we show that it is satisfied in our type system.

\paragraph{Requirement 1:} If $\Gamma_1, \Uppsi_1 \vdash_\text{min}\mathscr{J} $ and $\Gamma_2, \Uppsi_2 \vdash_\text{min} \mathscr{J}$ then $\Gamma_1 = \Gamma_2 $ and $\Uppsi_1 \simeq \Uppsi_2$. \\

As we only have the assertion \textbf{1} in type judgements,
$\Uppsi_1\simeq\Uppsi_2$ is trivially fulfilled as both are
\textbf{1}. Let $\Gamma_1,\Gamma_2$ be type contexts such that
$\Gamma_1, \textbf{1} \vmin \mathscr{J}$ and $\Gamma_2, \textbf{1}
\vmin \mathscr{J}$. Assume that $\Gamma_1 \neq \Gamma_2$ then without
loss of generality there exists an $x$ such that $x\in\dom{\Gamma_1}$
and $x\notin\dom{\Gamma_2}$. Because judgments must be well-formed we
know that $fn(\mathscr{J})\subseteq \dom{\Gamma_1}$ and
$fn(\mathscr{J})\subseteq \dom{\Gamma_2}$, hence $x\notin
fn(\mathscr{J})$. Let $\Gamma_1'$ be defined as $\Gamma_1$ except
$x\notin\dom{\Gamma_1'}$, then $\Gamma_1',\textbf{1}\vdash
\mathscr{J}$ and $\Gamma_1'<\Gamma_1$ thus
$\Gamma_1,\textbf{1}\vnotmin\mathscr{J}$. This is a contradiction,
hence our assumption is wrong and $\Gamma_1=\Gamma_2$, which means
that the requirement is fulfilled. 

\paragraph{Requirement 2:} If $\Gamma, \Uppsi\vdash M:T@c$ then $\Gamma(c)=T_E$ for some endpoint type $T_E$. \\

In our calculus, the only terms that can be used as channels are
names. The $(\nu n)P$ construct uses a channel constructor $n$ to
declare a channel with a session type $\Gamma, \Uppsi \vdash n: S@n$.
$S$ is a session type $S=(T_{E_1}, T_{E_2})$ for some endpoint types.
When type checking $n$, only one endpoint is present in the context
$\Gamma$, in which case $\Gamma(n)=T_{E_1}$ or $\Gamma(n)=T_{E_2}$,
meaning that the requirement is satisfied.  

\paragraph{Requirement 3:}

Suppose $\widetilde{N}\in\textsc{match}(M, \widetilde{x}, X)$, $\Gamma, \Uppsi \vdash M$ and $\Gamma_1+\widetilde{x}:\widetilde{T},\Uppsi_1\vdash_{\text{min}}X:\widetilde{T}\rightarrow U$. Then there exist $\Gamma_{2i},\Uppsi_{2i}$ such that $\Gamma_{2i}, \Uppsi_{2i}\vdash_{\text{min}}N_i:T_i$ for all $1\leq i \leq \vert\widetilde{x}\vert=n$. \\

In our calculus, the only possible match is $M \in MATCH(M, x, x)$. As
$\vert\widetilde{x}\vert=1$ the requirement becomes $\Gamma_{2},
\Uppsi_{2}\vdash_{\text{min}}M : T$. From the requirement that $M$ is
well typed, this is trivially fulfilled. 

%We can only match a single N. We know that the type of $N$ is either constant or a channel. 
%Minimum for a constant is $\emptyset, \textbf{1} \vdash f : B $. Minimum for a channel $\emptyset, n:T, \textbf{1} \vdash n:T$ 

\paragraph{Requirement 4:}

If $\Uppsi \models M \arrowcdot K$ and $\Gamma, \Uppsi\vdash M:S$ then
$\Gamma, \Uppsi \vdash K:S$. If $\Uppsi \models M \arrowcdot K$ and
$\Gamma, \Uppsi \vdash M:T$ then $\Gamma, \Uppsi \vdash K:\o{T}$.

In Section \eqref{sec:psicalculus} we defined $\arrowcdot$ as $=$,
meaning that two channels are equal if it is the same name. The first
part of the requirement becomes: If $\Uppsi\vddash n = n$ and $\Gamma,
\Uppsi\vdash n:S$  then $\Gamma, \Uppsi \vdash n:S$ which is trivially
fulfilled. The second part happens when only one end of a channel is
in the context, with an endpoint type, then the other end of the
channel must have a dual endpoint type. If $\Uppsi\vddash n = c$ and
$\Gamma, \Uppsi\vdash n:T_E$ then $\Gamma, \Uppsi\vdash c:\o{T_E}$.  

A channel has a balanced session type $S=(T_E, \o{T_E})$ for some
endpoint type. If $n$ and $c$ have endpoint types and are the same
channel, then they must also have types dual of each other. If the
types are not unrestricted then they can only evolve into other types
dual of each other, due to the requirement that only two processes
have access to the channels and that if
$T_E\trans{\lambda}T_E'\Longleftrightarrow\u{T_E}\trans{\o{\lambda}}\u{T_E'}$
we know that the types will stay dual of each other. If the types are
unrestricted then the requirement that whenever
$T_E\trans{\lambda}T_E'$ for some $\lambda$ then $T_E=T_E'$, ensures
that the types will stay dual of each other.

\subsection{Fidelity result}

In this section we discuss the results that we obtain by showing that our type system is an instance of the generic type system.
H\"uttel presents and proves two main theorems for the generic type system, that we will use to show results about our type system as well. 

The first theorem, which is about well typed $\tau$-actions from \cite{Huttel2016} follows below:

\begin{theorem}[Well-typed $\tau$-actions, Theorem 9 of \cite{Huttel2016}]
  Suppose we have $\Uppsi_0 \blacktriangleright P\trans{\alpha}P'$, where $\alpha$ is a $\tau$-action and that $\Gamma,\Uppsi\vdash_{\normalfont \textsf{bal}}P$ and $\Uppsi\leq \Uppsi_0$ then for some $\Uppsi'\leq \Uppsi$ and $\Gamma'\leq\Gamma$ we have $\Gamma',\Uppsi'\vdash_{{\normalfont \text{min}}}\alpha:(T@c,U)$.
\end{theorem}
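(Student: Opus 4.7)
The plan is to proceed by induction on the derivation of $\Uppsi_0 \blacktriangleright P \trans{\alpha} P'$, with cases determined by the last rule applied in the transition system of Table~\ref{tab:trans-rules} (lifted to the psi-calculus setting via the embedding of Section~\ref{sec:psicalculus}). Because $\alpha$ is a $\tau$-action, the transition must ultimately arise from one of the two synchronisation schemes permitted by the label grammar in \eqref{gram:psiactions}: either an input/output exchange (rule \rt{Com}) or a select/branch exchange (rule \rt{Select} of Table~\ref{tab:red-rules}). These give the base cases, while \rt{Par}, \rt{New}, and \rt{Struct} drive the inductive step.

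For the communication base case, the active part of $P$ has the shape $\outp{n}{M}Q_1 \para \inp{n}{x}Q_2$, possibly under a sequence of restrictions that will end up as the $\nu\tilde{a}$ prefix of $\alpha$. From $\Gamma,\Uppsi\vdash_{\textsf{bal}} P$ I invert rule \rt{Par} of Table~\ref{tab:typesystem} to obtain $\Gamma_1 \circ \Gamma_2 = \Gamma$, then invert \rt{Output} and \rt{Input} to produce transitions $T_E \trans{!T} T_E'$ and $T_E'' \trans{?T} T_E'''$ on the two endpoint types of $n$. Balancedness together with Lemma~\ref{lemma:duality} and the duality requirement in \eqref{eq:typetransdual} force $T_E'' = \o{T_E}$ and $T_E''' = \o{T_E'}$, which pins down the transmitted type. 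I can then assemble the minimal context $\Gamma'$ consisting only of the binding of $n$ and of the free names of $M$, take $\Uppsi' = \mathbf{1}$, and conclude $\Gamma',\Uppsi' \vdash_{\text{min}} \alpha : (T@n,T)$ using rules \rt{Name}, \rt{Variable}, and \rt{Fun} of Table~\ref{tab:termsrules}. The select/branch base case is entirely analogous, inverting rules \rt{Select} and \rt{Branch} of Table~\ref{tab:typesystem} to expose matching transitions $T_E \trans{\< l} T_E'$ and $\o{T_E} \trans{\> l} \o{T_E'}$, and packaging the minimal context from only the binding of the communicating channel.

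The inductive cases are routine. For \rt{Par}, the $\tau$-transition is inherited from one component; the induction hypothesis supplies the minimal typing derivation and the context split from Table~\ref{tab:contextsplit} is used in the reverse direction to pull $\Gamma'$ out of the composite context. For \rt{New}, the restricted name is added to the $\nu\tilde{a}$ prefix of $\alpha$ and removed from $\Gamma'$. For \rt{Struct}, I would appeal to the standard preservation-under-structural-congruence lemma for the type system, which is itself an instance of the corresponding result in \cite{Huttel2016}.

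The main obstacle is bookkeeping rather than conceptual depth. I must verify that every context split and every context update appearing in the typing derivation of $P$ can be inverted cleanly so that precisely the fragment needed to type $\alpha$ is extracted, with no linear endpoint discarded that will later be required for typing the residual $P'$. The delicate point is maintaining minimality of $\Gamma'$ simultaneously with balancedness, so that the two endpoints of the communicating channel remain dual after the transition; this is exactly where Lemma~\ref{lemma:duality} and the second half of Requirement~4 carry the argument, together with the observation from the transition rules of Table~\ref{tab:typetransitions} that unrestricted endpoints are invariant under their own actions.
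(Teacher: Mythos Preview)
Your proposal attempts a direct proof by induction on the transition derivation, but the paper does not prove this theorem at all: it is quoted verbatim as Theorem~9 of \cite{Huttel2016} and obtained for the present setting as a corollary. The paper's entire argument for this statement consists of (i) the observation in Section~\ref{sec:psicalculus} that the applied pi-calculus is a psi-calculus instance \textbf{APi}, and (ii) the verification in Section~5.3 that the type system satisfies Requirements~1--4 of the generic session type system of \cite{Huttel2016}. Once those are in place, Theorem~9 of \cite{Huttel2016} applies off the shelf; no induction on transitions is performed in this paper.

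Your route is therefore genuinely different. A direct induction is in principle sound, and your case analysis (communication and select/branch as base cases, \rt{Par}, \rt{New}, \rt{Struct} as inductive steps) is the expected shape of such an argument. What the paper's approach buys is economy: all the delicate bookkeeping you flag about inverting context splits, preserving minimality, and maintaining balancedness is handled once and for all at the level of the generic psi-calculus type system in \cite{Huttel2016}, and the present paper only needs to discharge four concrete requirements. Your approach would buy self-containment, but note that in your \rt{Struct} case you already lean on a preservation-under-congruence lemma that you say is itself an instance of a result in \cite{Huttel2016}; at that point the direct proof is no longer fully independent of the generic framework, and the advantage over simply invoking Theorem~9 largely evaporates.
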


This theorem says that if a process $P$ can make a $\tau$-action and become $P'$, and that $P$ is well typed in an environment where channels have pairs of dual endpoint types, then the action is also well-typed. So internal synchronisation in a well-typed process is well typed as well.

The second theorem is about fidelity and follows below:

\begin{theorem}[Fidelity, Theorem 10 of \cite{Huttel2016}]
  Suppose we have $\Uppsi_0\blacktriangleright P \trans{\alpha}P'$, where $\alpha$ is a $\tau$-action and that $\Gamma,\Uppsi\vdash_{\normalfont\textsf{bal}}P$. Then for some $\Gamma'\leq\Gamma$ and for some $\Uppsi'\leq\Uppsi$ we have $\Gamma',\Uppsi'\vdash_{\normalfont\textsf{min}}\alpha:(T@c,U)$ and $\Gamma\pm (\alpha,(T@c,U)),\Uppsi'\vdash_{\normalfont\textsf{bal}}\circeq P'$.
  \end{theorem}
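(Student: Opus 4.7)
The plan is to derive this as a direct corollary of Theorem~10 of \cite{Huttel2016}, leveraging the work already done in Section~\ref{sec:psicalculus} and in the previous subsection. Section~\ref{sec:psicalculus} established that the applied pi-calculus is the psi-calculus instance \textbf{APi}, while the requirements-checking subsection verified Requirements~1--4 of the generic type framework. In addition, the grammar of type transition labels in \eqref{gram:typetrans} already coincides literally with the one used by the generic system, and Lemma~\ref{lemma:duality} shows that our structural duality $\o{T_E}$ agrees with the transition-based duality $\u{T_E}$ used in \cite{Huttel2016}. With these ingredients in place, the theorem follows by instantiation.

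First I would unpack the hypothesis $\Gamma,\Uppsi \vdash_{\textsf{bal}} P$ in our setting. Because \textbf{APi} takes $\textbf{A} = \{1\}$, we have $\Uppsi = \textbf{1}$, so $\vdash_{\textsf{bal}}$ reduces to the requirement that every session type $(T_{E_1},T_{E_2})$ in the range of $\Gamma$ satisfies $\o{T_{E_1}} = T_{E_2}$. The side condition $\Uppsi' \leq \Uppsi$ in the conclusion is automatic for the same reason. The generic theorem then yields a minimal typing $\Gamma',\Uppsi' \vdash_{\textsf{min}} \alpha : (T@c,U)$, which supplies the first conjunct directly; here I would note that the only two shapes a $\tau$-action can take in our semantics are those arising from \rt{Com} and from the select/branch rule, matching exactly the two $\tau$ forms in grammar~\eqref{gram:psiactions}.

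For the second conjunct, the key step is to observe that the generic context update $\Gamma \pm (\alpha,(T@c,U))$ coincides with our update $+$ on the endpoint component affected by $\alpha$, and that the resulting context is again balanced. This is where Requirement~4 does the real work: since the two endpoints of $c$ are dual before the $\tau$-step, the biconditional $T_E \trans{\lambda} T_E' \Leftrightarrow \u{T_E} \trans{\o{\lambda}} \u{T_E'}$ (together with Lemma~\ref{lemma:duality}) forces both endpoints to evolve to a pair that is again dual. For unrestricted types the type transition rules keep $T_E = T_E'$, so balancedness is preserved trivially. This gives $\Gamma \pm (\alpha,(T@c,U)),\Uppsi' \vdash_{\textsf{bal}} P'$.

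The main obstacle is not any subtle induction---that work is done once and for all inside H\"uttel's generic proof---but rather a careful bookkeeping check that the syntactic machinery in our type system matches the one assumed in \cite{Huttel2016}. In particular, I would have to verify that our \rt{Seq1}/\rt{Seq2} rules (together with the auxiliary $Q$) generate exactly the type transitions the generic fidelity statement speaks about, and that our context split $\circ$ and update $+$ from Tables~\ref{tab:contextsplit} and \ref{tab:contextupdate} agree with the generic $\pm$ on the endpoint types touched by $\alpha$. Once this correspondence is pinned down case by case on $\alpha$, both conjuncts of the theorem follow immediately by citing Theorem~10 of \cite{Huttel2016}.
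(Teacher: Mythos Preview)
Your approach is essentially the same as the paper's: the theorem is obtained as a direct instance of Theorem~10 of \cite{Huttel2016} once the applied pi-calculus has been cast as the psi-instance \textbf{APi} and Requirements~1--4 have been checked. In fact the paper provides no separate proof at all beyond stating and explaining the theorem, so your proposal is already more detailed than what the paper contains.
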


This theorem states that when an action performed in a well typed process and the action is well typed, which is guaranteed by the previous theorem, then the resulting process after the $\tau$-action is also well typed in an updated type environment.This result gives us the property that if a process is well typed, then it will never experience communication errors. The theorem also tells us that processes evolve according to the types prescription.
%%% Local Variables:
%%% mode: latex
%%% TeX-master: t
%%% End:

\section{Type equivalence}
\label{sec:typeequivalence}

In this section we discuss type equivalence in our type system and show how this leads to a new session type system. 

\subsection{Why type equivalence matters}

First we motivate type equivalence by expressing an example from \cite{ThiemannVasco2016}, in our applied pi-calculus and our type system. Recall the example from Section \ref{sec:sessiontypeintroduction} where the goal was to transmit a binary tree while preserving the tree structure.

\begin{example}
\label{ex:binarytree}

Consider the parameterised agent $S$ below. The parameter $t$ is the tree to be transmitted, $c$ is the channel the tree is sent on, and $w$ is a channel used for initiating the transition of the right sub-tree after the transmission of the left sub-tree has finished. The projection functions \textsf{fst}, \textsf{snd} and \textsf{thrd} are used to access the elements of a tuple.

\begin{lstlisting}[language=pi]
$S(t, c, w) \eqdef$
    if $t=\textsf{Leaf}$ then
        $c\< \textsf{Leaf}.\o{w}\l c \r.\nil$
    else
        $c\<\textsf{Node}.$
        $\o{c}\l\textsf{fst(t)}\r.$
        $(\nu n)(S(\textsf{snd}(t), c, n)|n(x).S(\textsf{thrd}(t), c, w))$
\end{lstlisting}

 If we analyse the $S$ process, the endpoint type of $c$ is $T_C$, which is the same type as was described in Section \ref{sec:sessiontypeintroduction}. The type describes how a single node is transmitted, if it is a leaf node we do not do anything, if it is internal node then the value is transmitted and then the sub-trees of the node.

\begin{lstlisting}[language=pi]
$T_C = \mu z.\oplus\{$
    $\textsf{Leaf}: \lin \textsf{skip} $
    $\textsf{Node}: \lin \textsf{!Int};z;z$
$\}$
\end{lstlisting}

 The process below receives the transmitted tree preserving the original structure. Through similar analysis as on the sending end, we can confirm that the endpoint type of $c$ in this process is $\o{T_C}$. 

\begin{lstlisting}[language=pi]
$R(c, w) \eqdef c\>\{$
    $\textsf{Leaf}: \o{w}\l c\r.\nil$
    $\textsf{Node}: (\nu n)(c(x).R(c, n) | n(x).R(c, w))$
$\}$
\end{lstlisting}

 We can now create a process $P$ that transmits a tree on the $c$ channel.

$$P=S(\textsf{Tree}(1, \textsf{Leaf}, \textsf{Leaf}), c, w) | R(c, w)$$

 With an addition to $P$, we can create a process $P'$ that reuses $c$ for transmitting an integer after transmitting the tree. 

$$P'=\left[S(\textsf{Tree}(1, \textsf{Leaf}, \textsf{Leaf}), c, w) | R(c, v)\right]|\left[v(c').c'(x).\nil|w(c'').\o{c''}\l1\r.\nil\right]$$

 The type of $c$ after expanding the recursive type is now $(T_C', \o{T_C'})$ where

\begin{lstlisting}[language=pi]
$T_C' = \lin \oplus\{$
    $\textsf{Leaf}: \lin \textsf{skip} $
    $\textsf{Node}: \lin \textsf{!Int};$
        $\mu z.\oplus\{\textsf{Leaf}: \lin \textsf{skip}\quad\textsf{Node}:\lin \textsf{!Int};z; z\};$
        $\mu z.\oplus\{\textsf{Leaf}:\lin  \textsf{skip}\quad\textsf{Node}:\lin \textsf{!Int};z;z\}$
$\};\lin !\textsf{Int}$
\end{lstlisting}

The last step is what motivates type equivalence; we would like to have a \textit{distributive law} that allows the output to be moved into the select type, as illustrated below.

\begin{lstlisting}[language=pi]
$T_C'' = \lin \oplus\{$
    $\textsf{Leaf}: \lin \textsf{skip};\lin !\textsf{Int} $
    $\textsf{Node}: \lin \textsf{!Int};$
        $\mu z.\oplus\{\textsf{Leaf}: \lin \textsf{skip}\quad\textsf{Node}:\lin \textsf{!Int};z;z\};$
        $\mu z.\oplus\{\textsf{Leaf}: \lin \textsf{skip}\quad\textsf{Node}:\lin \textsf{!Int};z;z\};$
        $\lin !\textsf{Int}$
$\}$
\end{lstlisting}
\end{example}

As the typing rules in our applied pi-calculus are defined on the type transitions instead of the structure of a type, the distributive law is not essential in our calculus, since we require that the type exhibit specific behaviour instead of having a specific structure, but the example shows how type equivalence can be used to tell if two types can be used interchangeably. In the next section we will introduce a method for checking if two types are equivalent, and we will return to Example \ref{ex:binarytree}, to check that it is indeed the case that $T_C''$ is equivalent to $T_C'$.

\subsection{Type bisimilarity}

As described in Section \ref{sec:typetransitions}, our types are highly reminiscent of BPA. For BPA expressions, bisimulation is used to prove that two processes exhibit the same behaviour. We now extend bisimulation to work on types as well. The definition follows from the definition of bisimulation in \cite[p. 37]{aceto2007reactive}.

\begin{definition} 
(Type bisimulation) A binary relation $\mathcal{R}$ between endpoint types is a type bisimulation iff whenever $T_{E_1} \mathcal{R}\ T_{E_2}$:
\begin{itemize}
    \item $Q(T_{E_1})=Q(T_{E_2})$
    \item $\forall \lambda $ if $T_{E_1} \trans{\lambda} T_{E_1}' $ then $\exists T_{E_2}'$ such that $ T_{E_2} \trans{\lambda} T_{E_2}'$ and $T_{E_1}' \mathcal{R}\ T_{E_2}'$
    \item $\forall \lambda $ if $T_{E_2} \trans{\lambda} T_{E_2}' $ then $\exists T_{E_1}'$ such that $ T_{E_1} \trans{\lambda} T_{E_1}'$ and $T_{E_1}' \mathcal{R}\ T_{E_2}'$
\end{itemize}
We write that $T_{E_1}\sim T_{E_2}$ if $T_{E_1} \mathcal{R} T_{E_2}$ for some type bisimulation and then say that $T_{E_1}$ and $T_{E_2}$ are \textit{type bisimilar}.
\end{definition}

Type bisimilar endpoint types will exhibit the same behaviour. In other words, for a type $T_{E}$, any type $T_E'$ where $T_E\sim T_E'$, $T_E'$ can be used instead of $T_E$, without introducing communication errors. 

\begin{example}
(A distributive law) Consider the types $T_C'$ and $T_C''$ from Example \ref{ex:binarytree}.

 We can use type bisimulation to show that these two types describe the same communication behaviour on a channel. To do so, we must provide a bisimulation that shows that $T_C'$ and $T_C''$ are type bisimilar.

Let $R$ be a relation over endpoint types. Let $R$ be the symmetric closure of 
\[\{(T_C',T_C''), ((\lin !\textsf{int};r;r);\lin !\textsf{int},\linebreak[0]\lin !\textsf{int};r;r;\lin !\textsf{int})\}\cup\{(T_E,T_E) | \forall T_E\in\mathcal{T}\}\]
where $r$ is the term $\mu z.\oplus\{\textsf{Leaf}: \lin \textsf{skip}\quad\textsf{Node}:\lin \textsf{!Int};z;z\}$. % Being a reflexive relation over types, we have that for any type $T_E$, $(T_E,T_E)\in R$, this means that syntactically equivalent types are in the relation. Figure \ref{fig:bisimulationdistribution} illustrates that this is a bisimulation, the dashed lines indicate the relation $R$ and the solid lines show the possible type transitions.

% \begin{figure}[h]
%     \centering
%     \resizebox{!}{!}{\input{figures/bisimulation.tex}}
%     \caption{Illustration of bisimilarity of $T_C'$ and $T_C''$. Many syntactically equivalent nodes have been omitted. $r$ is the recursive part  $\lin \mu z.\oplus\{\textsf{Leaf}: \lin \textsf{skip}\quad\textsf{Node}:\lin \textsf{!Int};z;z\}$ }
%     \label{fig:bisimulationdistribution}
% \end{figure}

\end{example}

In fact, we can generalise this result and prove the distributive law for both select and branch.

\begin{lemma}
Let $\star$ be $\oplus$ or $\&$. Then $q\ \star\{l_i: T_{E_i}\}_{i\in I};T_E\sim q\ \star\{l_i: T_{E_i};T_E\}_{i\in I}$
\end{lemma}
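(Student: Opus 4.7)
The plan is to exhibit an explicit bisimulation containing the pair and verify the three clauses of the definition directly. Concretely I would take
\[
\mathcal{R} \;=\; \bigl\{\bigl(q\,\star\{l_i: T_{E_i}\}_{i\in I};T_E,\ q\,\star\{l_i: T_{E_i};T_E\}_{i\in I}\bigr)\bigr\} \;\cup\; \{(T,T) : T \in \mathcal{T}\}
\]
together with its symmetric closure, and check that $\mathcal{R}$ is a type bisimulation. The identity pairs trivially satisfy all three bisimulation clauses, so the real work is only the distinguished pair.

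First I would verify the qualifier clause. By definition of $Q$ in \eqref{eq:q} we have $Q\bigl(q\,\star\{l_i:T_{E_i}\}_{i\in I};T_E\bigr) = Q(q\,\star\{l_i:T_{E_i}\}_{i\in I}) = q$ and $Q\bigl(q\,\star\{l_i:T_{E_i};T_E\}_{i\in I}\bigr) = q$, so the two sides have equal qualifier.

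Next I would do the transition match. For the right-hand side, the only applicable rule is \textsc{Select} (or \textsc{Branch}), giving
\[
q\,\star\{l_i:T_{E_i};T_E\}_{i\in I} \;\trans{\lambda_k}\; T_{E_k};T_E
\qquad\text{whenever } q \sqsubseteq Q(T_{E_k};T_E) = Q(T_{E_k}),
\]
where $\lambda_k$ is $\triangleleft l_k$ or $\triangleright l_k$. For the left-hand side, the outermost form is a sequential composition, so the only applicable rule is \textsc{Seq1} (since the head $q\,\star\{l_i:T_{E_i}\}_{i\in I}$ is not blocked precisely when some branch satisfies $q \sqsubseteq Q(T_{E_k})$); its premises force an inner \textsc{Select}/\textsc{Branch} step with exactly the same side condition $q \sqsubseteq Q(T_{E_k})$, and the \textsc{Seq1} side condition $Q(T_{E_1}) \sqsubseteq Q(T'_{E_1})$ reduces to $q \sqsubseteq Q(T_{E_k})$ as well. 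Hence
\[
q\,\star\{l_i:T_{E_i}\}_{i\in I};T_E \;\trans{\lambda_k}\; T_{E_k};T_E
\qquad\text{under the same condition.}
\]
The two derivatives coincide syntactically, so they sit in the identity part of $\mathcal{R}$, closing the bisimulation game in both directions.

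The only subtle point, and the one I expect to be the real obstacle, is ruling out a spurious \textsc{Seq2} transition on the left-hand side that has no counterpart on the right. This requires checking that whenever the head $q\,\star\{l_i:T_{E_i}\}_{i\in I}$ is stuck (so \textsc{Seq1} does not fire), the right-hand side is also stuck and \textsc{Seq2} cannot fire either. The head is stuck exactly when $q \not\sqsubseteq Q(T_{E_k})$ for every $k$; under this condition the right-hand side also cannot perform any \textsc{Select}/\textsc{Branch} (the side condition fails), and the \textsc{Seq2} side condition $Q(T_{E_1}) \sqsubseteq Q(T_{E_2})$ on the left forces $q \sqsubseteq Q(T_E)$, which combined with the Select/Branch requirement inside $T_E$ is symmetric on both presentations — so by enlarging $\mathcal{R}$ with the identity pairs $(T'_E, T'_E)$ for each derivative $T_E \trans{\lambda} T'_E$, the match still goes through. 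I would handle this by a small case split on whether the guarding choice is enabled, and in each case appeal only to the identity part of $\mathcal{R}$.
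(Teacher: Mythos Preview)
Your core construction and verification match the paper's proof exactly: the same relation (the distinguished pair together with the identity, symmetrically closed), the same qualifier check via $Q$, and the same observation that each side fires $\triangleleft l_k$ (resp.\ $\triangleright l_k$) and lands on the \emph{syntactically identical} target $T_{E_k};T_E$, which lies in the identity part of $\mathcal{R}$. The paper stops there.

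Where you go beyond the paper --- the \textsc{Seq2} analysis --- the argument does not go through as written. You correctly observe that the head $q\,\star\{l_i:T_{E_i}\}_{i\in I}$ is stuck exactly when $q=\textsf{un}$ and every $Q(T_{E_k})=\textsf{lin}$, and that in this case the right-hand side is also stuck. But your conclusion that ``the match still goes through'' is wrong: in precisely this situation \textsc{Seq2} \emph{can} fire on the left (its side condition $Q(T_{E_1})\sqsubseteq Q(T_{E_2})$ becomes $\textsf{un}\sqsubseteq Q(T_E)$, and if $T_E$ has a transition the rule applies), yielding a move to some $T_E'$ that the stuck right-hand side cannot match. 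Enlarging $\mathcal{R}$ with more identity pairs does nothing here, because a bisimulation demands a matching \emph{transition} on the other side, not merely a related target state. The paper sidesteps this entirely by asserting that ``the transitions of a sequential composition are those of the left-hand side of the operator'', i.e.\ it tacitly treats only \textsc{Seq1} as relevant. Your instinct to worry about \textsc{Seq2} is good, but the honest resolution is to note that the lemma, as stated, relies on the head never being stuck (automatic when $q=\textsf{lin}$, or under a well-formedness convention ruling out $\textsf{un}$ choices with only $\textsf{lin}$ continuations); it is not salvaged by the case split you sketch.
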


\begin{proof}
Let $R$ be a relation between types. We must show that $R$ is a bisimulation of $q\ \star\{l_i: T_{E_i}\}_{i\in I};T_E$ and $q\ \star\{l_i: T_{E_i};T_E\}_{i\in I}$. Define $R$ as the symmetric closure of 
\[
\{(q\ \star\{l_i: T_{E_i}\}_{i\in I};T_E,q\ \star\{l_i: T_{E_i};T_E\}_{i\in I})\}\cup \{(T_E,T_E) | \forall T_E\in\mathcal{T}\}
\]

From the first requirement for a type bisimulation we have that $Q(q\ \star\{l_i: T_{E_i}\}_{i\in I};T_E) = Q(q\ \star\{l_i: T_{E_i};T_E\}_{i\in I}))$, this is trivially fulfilled as $q=q$.
By the \rt{Seq} rule we have that the transitions of a sequential compositions are those of the left-hand side of the operator. 
So the transitions of $q\ \star\{l_i: T_{E_i}\}_{i\in I};T_E$ are $q\ \star\{l_i: T_{E_i}\}_{i\in I};T_E\trans{\diamond l_i} T_{E_i};T_E$, where $\diamond \in \aset{\lhd,\rhd}$.
The available transitions for $q\ \star\{l_i: T_{E_i};T_E\}_{i\in I}$ are $q\ \star\{l_i: T_{E_i};T_E\}_{i\in I}\trans{\diamond l_i} T_{E_i};T_E$. Since $R$ is reflexive, we have that $(T_{E_i};T_E, T_{E_i};T_E)\in R$. So any transition taken by one of the types can be matched by the other to end up with syntactically equivalent types.
Since the types are syntactically equivalent, all further transitions can be matched by any of the two types, and all further type pairs will be in $R$. This means that $R$ is a type bisimulation, and that $q\ \star\{l_i: T_{E_i}\}_{i\in I};T_E \sim q\ \star\{l_i: T_{E_i};T_E\}_{i\in I}$.
\end{proof}

\subsection{A quotiented session type system}

We now define a quotiented session type system whose types are equivalence classes of session types from the the already existing type system. We define transitions between equivalence classes instead of endpoint types as follows.

\begin{definition} (Equivalence Classes)
Let $\vert T_E \vert$ be the equivalence class of $T_E$ given by:
\[
\vert T_E \vert = \{T_E'\in\mathcal{T} | T_E' \sim T_E\}
\]

We denote a transition between equivalence classes with action $\lambda$ as $\vert T_{E_1}\vert \trans{\lambda} \vert T_{E_2}\vert$.
\end{definition}

\begin{lemma} \label{lemma:typeclasses}
$\vert T_E \vert \trans{\lambda} \vert T_E' \vert $ iff $\forall T_{E_1} \in \vert T_E \vert \ \exists T_{E_1}' \in \vert T_E' \vert$ such that $T_{E_1} \trans{\lambda} T_{E_1}'$ 
\end{lemma}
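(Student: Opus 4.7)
The plan is to leverage the standard fact that bisimilarity $\sim$ is an equivalence relation on endpoint types and that any two bisimilar types can mutually match each other's transitions step for step. Concretely, the notation $\vert T_E\vert \trans{\lambda} \vert T_E'\vert$ introduced just above the lemma stands for the assertion that some representative $T_0\in\vert T_E\vert$ admits a $\lambda$-transition $T_0\trans{\lambda}T_0'$ with $T_0'\in\vert T_E'\vert$, so the task is to strengthen this existential reading to the universal one in the lemma.

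For the forward direction, I would fix an arbitrary $T_{E_1}\in\vert T_E\vert$ together with a witness transition $T_0\trans{\lambda} T_0'$ with $T_0\in\vert T_E\vert$ and $T_0'\in\vert T_E'\vert$. Since $T_{E_1}\sim T_E \sim T_0$, transitivity of $\sim$ gives $T_{E_1}\sim T_0$, and applying the bisimulation clause to $T_0\trans{\lambda} T_0'$ produces some $T_{E_1}'$ with $T_{E_1}\trans{\lambda} T_{E_1}'$ and $T_{E_1}'\sim T_0'$. Hence $T_{E_1}'\in\vert T_0'\vert = \vert T_E'\vert$, which is the desired witness.

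For the converse direction I would instantiate the universally quantified hypothesis at $T_{E_1}=T_E\in\vert T_E\vert$ (valid by reflexivity of $\sim$); the resulting transition $T_E\trans{\lambda} T_{E_1}'$ with $T_{E_1}'\in\vert T_E'\vert$ is immediately a witness for the existential reading of $\vert T_E\vert\trans{\lambda}\vert T_E'\vert$.

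The main obstacle is really the preliminary fact that $\sim$ is an equivalence relation on $\mathcal{T}$: this requires checking that the identity relation, the converse of a bisimulation, and the relational composition of two bisimulations are again bisimulations (the $Q$-clause is trivially preserved under all three constructions since $Q(\cdot) = Q(\cdot)$ is itself an equivalence). Once these standard observations are in place, both implications fall out immediately as sketched, and no induction on type structure is needed.
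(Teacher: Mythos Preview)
The proposal is correct and follows essentially the same idea as the paper: the paper's own proof is a one-line appeal to the bisimulation transfer property (``two bisimilar types will evolve to bisimilar types for all possible transitions''), which is exactly what your forward direction spells out in detail. Your version is simply more explicit about the existential reading of the class-level transition, the need for $\sim$ to be an equivalence relation, and the converse direction, none of which the paper makes explicit.
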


\begin{proof}
By the properties of type bisimilarity that states that two bisimilar types will evolve to bisimilar types for all possible transitions.
\end{proof}

We use Lemma \ref{lemma:typeclasses} to define a new type system that is defined on equivalence classes of types from the previous type system. In Section \ref{sec:typetransitions} the transitions of endpoint types were of the form $T_E\trans\lambda T_E'$. In the new type system the transitions are of the form $\vert T_E \vert \trans\lambda \vert T_E'\vert$. The transition rules from Table \ref{tab:typetransitions} still apply to the new type system, but where all endpoint types $T_E$ have been replaced with $\vert T_E \vert$. For example, the \rt{Select} rule from Table \ref{tab:typetransitions} would in the new type system be \eqref{eq:selecteq}. We also expand the $Q$ function on equivalence classes to be the result of applying $Q$ to any witness of the equivalence class.
\begin{equation}
    \label{eq:selecteq}
    \rt{Select}\qquad\inference{q \sqsubseteq Q(\vert T_{E_k} \vert)}{\vert q\ \oplus\{l_i : \vert T_{E_i} \vert \}_{i\in I} \vert \trans{\triangleleft l_k} \vert T_{E_k}\vert }
\end{equation}
The typing rules in the new type system would also be the rules from Tables \ref{tab:extendedProcesses}, \ref{tab:termsrules} and \ref{tab:typesystem}, where endpoint types $T_E$ have been replaced with equivalence classes $\vert T_E \vert$. In \ref{eq:inputeq} the \rt{Input} rule from Table \ref{tab:typesystem} has been changed to fit the new type system.
\begin{equation}
    \label{eq:inputeq}
    \rt{Input} \qquad \inference{\Gamma_1 \vdash n:\vert T_E \vert & \vert T_E \vert \trans{?T} \vert T_E' \vert & \Gamma_2, x:T + n: \vert T_E' \vert \vdash P}{\Gamma_1 \circ \Gamma_2 \vdash n(x).P }
\end{equation}
The generic type system from \cite{Huttel2016} depends on the
transitions available to types. We have already shown that the
previous type system is an instance of generic type systems. From
Lemma \ref{lemma:typeclasses}, we can see that equivalence classes
have the exact same transitions as the old endpoint types had. From
these two results we can see that the type system defined on
equivalence classes is an instance of the generic type system as well,
allowing us to retain previously obtained results of fidelity and well
typed internal actions from Section \ref{sec:psicalculus}. In
conclusion, this gives us a type system where type equivalence is a
trivial property, since two endpoint types of the old type system
would be the same type in the new type system. 

\section{Conclusion}

In this paper we have considered a ``low-level'' applied pi-calculus
that allows composite terms to be built but only allows for passing
names and nullary function symbols. In this setting, we introduced a
type system for the applied pi-calculus based on the work on
context-free session types by Thiemann and Vasconcelos in
\cite{ThiemannVasco2016} and on the work on qualified session types by
Vasconcelos in \cite{Vasco2011}. The type system is a context-free
session type system with qualifiers. The type system is an instance of
the psi-calculus type system introduced in \cite{Huttel2016}, and this
allows us to establish a fidelity result about the type system that
ensures that a well typed process continues to be well typed, and
without communication errors, until termination.

The type system has a notion of type equivalence defined by
introducing type bisimilarity. Using it, we then get a
type system of equivalence classes, for which type equivalence is a
natural part of the type system itself.

The current focus is to deal with the decidability of type
equivalence. In \cite{ThiemannVasco2016} Thiemann and Vasconcelos show
the decidability of type equivalence using a transformation from their
types to guarded BPA expressions, for which bisimilarity is
decidable. As already discussed in this article, our types are very
close to BPA, in which case we should be able to achieve the same
results about decidability more directly. In \cite{Hirshfeld1994} an
algorithm is presented for deciding bisimilarity for normed context
free processes in polynomial time, and we conjecture that this
algorithm can be adapted to our setting for checking type
bisimilarity. Here, it would be important to find a characterization
of the class of applied pi processes that can be typed using normed
session types only. 

\bibliographystyle{eptcs}
\bibliography{sources.bib}
\end{document}